\newtheorem{theorem}{Theorem}[]
\newtheorem{lemma}{Lemma}[]
\newtheorem{assmp}{Assumption}
\newcommand{\G}{\mathcal{G}}
\newcommand{\V}{\mathcal{V}}
\newcommand{\E}{\mathcal{E}}
\newcommand{\A}{\mathbf{A}}
\newcommand{\I}{\mathbf{I}}
\newcommand{\uu}{\text{u}}
\newcommand{\yy}{\text{y}}
\newcommand{\R}{\mathbb{R}}
\newcommand{\rr}{\mathbf{r}}
\newcommand{\du}{\mathbf{d_u}}
\newcommand{\dy}{\mathbf{d_y}}
\newcommand{\dtheta}{\mathbf{d}_{\theta}}
\newcommand{\ones}{1}
\newcommand{\p}{\mathbf{P}}
\newcommand{\yb}{\mathbf{y}}
\newcommand{\x}{\mathbf{x}}
\newcommand{\ka}{\mathcal{M}}
\newcommand{\rc}{\mathcal{R}}
\DeclareMathOperator*{\Proj}{Proj}
\DeclareMathOperator*{\vect}{vec}
\newcommand{\nul}{\operatorname{null}}
\newcommand{\sn}{\operatorname{span}}
\DeclareMathOperator{\zb}{\mathbf{z}}
\DeclareMathOperator{\N}{\mathcal{N}}
\DeclareMathOperator*{\argmin}{argmin}
\DeclareMathOperator*{\minimize}{minimize}
\newcommand\norm[1]{\left\lVert#1\right\rVert}
\newcommand{\longthmtitle}[1]{\mbox{}{\bf \textit{(#1).}}}
\begin{document}

\title{\textbf{Adaptive Online Model Update Algorithm for Predictive Control in Networked Systems}
}
\author{Vivek Khatana, Chin-Yao Chang, Wenbo Wang
\thanks{ Vivek Khatana is with the Department of Electrical and Computer Engineering, University of Minnesota, Minneapolis, USA (Email: \{khata010\}@umn.edu).}%
\thanks{ Chin-Yao Chang and Wenbo Wang are with the National Renewable Energy Laboratory, Golden, CO 80401, USA (Email: \{chinyao.chang, wenbo.wang\}@nrel.gov).}
\thanks{This work was authored in part by NREL, operated by Alliance for Sustainable Energy, LLC, for the U.S. Department of Energy (DOE) under Contract No. DE-AC36-08GO28308. Funding provided by DOE Office of Electricity, Advanced Grid Modeling Program, through agreement NO. 33652. The views expressed in the article do not necessarily represent the views of the DOE or the U.S. Government. The U.S. Government retains and the publisher, by accepting the article for publication, acknowledges that the U.S. Government retains a nonexclusive, paid-up, irrevocable, worldwide license to publish or reproduce the published form of this work or allow others to do so, for the U.S. Government purposes.}
}

\maketitle

\begin{abstract}
In this article, we introduce an adaptive online model update algorithm designed for predictive control applications in networked systems, particularly focusing on power distribution systems. Unlike traditional methods that depend on historical data for offline model identification, our approach utilizes real-time data for continuous model updates. This method integrates seamlessly with existing online control and optimization algorithms and provides timely updates in response to real-time changes. This methodology offers significant advantages, including a reduction in the communication network bandwidth requirements by minimizing the data exchanged at each iteration and enabling the model to adapt after disturbances. Furthermore, our algorithm is tailored for non-linear convex models, enhancing its applicability to practical scenarios. The efficacy of the proposed method is validated through a numerical study, demonstrating improved control performance using a synthetic IEEE test case.
\\[1ex]
\textit{keywords}: Model-identification, data-driven model predictive control, distributed optimization, online optimization, power grid, networked systems. 
\end{abstract}

\section{Introduction}\label{sec:introduction}
System models are essential for understanding and controlling complex systems, as they enable accurate predictions, analysis, and optimization of resources. The mathematical models abstract complex, real-world phenomena into manageable representations. With the increase in data availability, scientists and practitioners favor adaptive reconfiguration of the model representations to conform to the latest data measurements. In this article, we focus on identifying the sub-system models capturing the global behavior of a networked system to solve the following predictive control problem
\begin{align}\label{eq:control_prob}
\minimize_{\uu(1), \uu(2), \dots, \uu(T) \in \mathbb{R}^\du} & \sum_{k=1}^T \sum_{i=1}^N \ell_{i}(y(k),k) + h_{i}(\uu_i(k), k) \\
\mbox{subject to} \ \uu_i(k) & \in \mathcal{U}_i, \ \mbox{for all} \ i = 1,2,\dots, N, \ \mbox{for all} \ k  \nonumber  
\end{align}
where, $\uu(k) = [\uu_1(k), \uu_2(k), \dots, \uu_N(k)] \in \mathbb{R}^\du$ is the control decision at time $k$ with constraint sets $ \mathcal{U}_i \subseteq \mathbb{R}^{\du_i}$, $\sum_{i=1}^N \du_i = \du$,
and $y(k) \in \mathbb{R}^\dy$ is an observable that involves the physical or behavioral inter-dependencies among the sub-systems. Functions $\ell_{i}$ and $h_{i}$ in~\eqref{eq:control_prob} capture the costs due to the output $y(t)$ and the control inputs $\uu_i$ respectively at the sub-system $i$. The predictive control problems~\eqref{eq:control_prob} appear in the context of online optimal control, communication systems, and robotic networks \cite{roque2022coordination,spudic2015cooperative} to mention a few. More recently, the problem has also been of interest in the control and operation of power systems \cite{andrey_2}.

Suppose, the sub-system inter-dependencies are modeled via the parametric description between the observable $y$ and the inputs $\uu$,  
\begin{align}
   \ka(\theta) :  y_\theta(t) = \phi(\uu(t), \theta), \label{eq:data_model}
\end{align}
where $\theta \in \mathbb{R}^{\mathbf{d}_\theta}$ is the parameter of the map $\phi: \mathbb{R}^{\du} \times \mathbb{R}^{\mathbf{d}_\theta} \to \mathbb{R}^{\dy}$ that captures the relation between the control inputs, $\uu(t)$, and the (parameterized) observable or output of
the networked system, $y_\theta(t)$, at time $t$. It is assumed that there exists a vector $\theta^\star$ such that the true system output $y(t)$ is given by $y(t) = y_{\theta^\star}(t) = \phi(\uu(t),\theta^\star)$. Note that optimization~\eqref{eq:control_prob} presumes the knowledge of the input-output map~\eqref{eq:data_model}. The decisions $\uu(k)$ are determined based on the postulated output $y_\theta (k)$ via a model of the form~\eqref{eq:data_model} and are sensitive to model mismatches. Under model imperfections, the generated control inputs might drive the network operation to an undesirable state. 

Given~\eqref{eq:control_prob} and~\eqref{eq:data_model}, the problem addressed in the current article pertains to the development and analysis of an algorithm that enables the update of the parametric input-output map in an online manner based on current data measurements to incorporate real-time variations of the controlled system and generate optimal control inputs. The exact description of the class of parametric maps chosen is given in Section~\ref{sec:dis_model_ID}. 

\subsection{Literature Review}
System identification~\cite{ljung1998system} is a broad topic that spans multiple fields. Specialized methodologies, such as learning-based methods~\cite{hastie2009overview} and behavioral system theory for non-parametric models~\cite{markovsky2021behavioral}, can generally be viewed as system identification. As for the applications for power distribution systems, utilities typically maintain feeder models in distribution planning and geographic information system databases~\cite{feeder_model}. However, operational changes to the grid, such as upgrades and reconfigurations~\cite{reconfiguration}, as well as database errors~\cite{data_issue}, necessitate ongoing maintenance of these models and databases~\cite{EPRI_distribution_guidelines}. Voltage control and other operational controls~\cite{bernstein2023time,bernstein2019online}, if based on erroneous or outdated models and data, can adversely affect system stability and reliability. Model identification techniques can address these model consistency issues. Some approaches involving machine learning methods~\cite{chiuso2019system,alimi2020review} require centralized data collection, which raises data privacy concerns and necessitates communication infrastructures. A multi-agent-based distributed approach~\cite{mcarthur2007multi,mahela2020comprehensive} can mitigate these concerns. Building on this foundation, our previous work,~\cite{chang2023privacy}, advanced the state-of-the-art in distributed identification methods that protect local data privacy for \textit{linear} systems, albeit with some communication requirements.

Building on the merits of our prior research's distributed and localized approach, the current article introduces several advancements. The main contributions are as follows:
\begin{enumerate}
\item  We develop a distributed algorithm for online model identification in networked nonlinear systems to enable the observable estimate in problem~\eqref{eq:control_prob} to align with the true output of the system. 
\item We establish that the proposed online algorithm has a sublinear regret in identifying the true convex input-output map of the system.
\item The developed algorithm has several desirable properties:
\begin{itemize}
    \item it preserves the local input data privacy for every sub-system.
    \item it requires only the latest measurements for updates, eliminating the need for storing historical data, and has substantially less communication bandwidth requirement. 
\end{itemize} 
\end{enumerate}
We present a numerical simulation study to demonstrate the performance of the proposed algorithm. The predictive control problem~\eqref{eq:control_prob} is instantiated as a voltage regulation problem in power systems. The numerical results corroborate the efficacy of the proposed algorithm in adaptively updating the input-output map of the test power system utilizing the latest measurements. The results establish that having access to an accurate nonlinear model provided by the proposed algorithm results in superior control performance compared to traditional linear models for power distribution systems, underscoring the practical value of our framework.\\[1ex]
At this point, we emphasize that the current work is not related to the body of research on the reconstruction and identification of unknown topology of an interconnected system using time-series measurement data \cite{mvsalapaka, weerts2018identifiability, mishfad_work}.\\[1ex]
The rest of the paper is organized as follows: Section~\ref{sec:def_notations} introduces some key definitions and notations used throughout the article. Sections~\ref {sec:dis_model_ID} and~\ref{sec:model_update} delve into the distributed system model identification framework and provide the preliminary analysis to aid the development of the model identification algorithm. 
Section~\ref{sec:sysid_alg} presents the proposed  algorithm, its convergence analysis and the distributed implementation details. A predictive control problem with online model updates is presented in Section~\ref{sec:online_controller}. Section~\ref{sec:num_sim} provides the simulation study and demonstrates the numerical results on the performance of the developed algorithm in solving the predictive control problem with online model updates introduced in Section~\ref{sec:online_controller}. The concluding remarks are provided in Section~\ref{sec:conclusion} with some directions for future research.

\subsection{Definition and Notations} \label{sec:def_notations}
In this paper, we denote matrices in boldface. 
For a vector $x\in\R^n$, we denote its $\ell^2$-norm and the norm induced by a matrix $\A\succ 0$ by $\|x\|_2$ and $\|x\|_\A$, respectively. 
The vertical and horizontal concatenation of matrices $\A^i$ are denoted as $[\A^1;\A^2;\cdots; \A^n]\in\R^{Nm \times n}$ and $[\A^1,\A^2,\cdots, \A^n]\in\R^{m\times Nn}$. The  Kronecker product of matrices $\A^i$ and $\A^j$ is denoted as $\A^i \otimes \A^j$.
For a matrix $\A\in\R^{m\times n}$, $\vect(\A)\in\R^{mn}$ is a column vector created by concatenating the column vectors of $\A$ from left to right. For a matrix $\A \in \R^{m \times n}$, $\nul \{\A \} := \{x \in \R^n | \A x = 0\}$ denotes the null space of matrix $\A$. Given a set $S$ of vectors, the linear span of $S$ is defined as $\sn \{ S \} := \{\sum_{i=1}^N v_i x_i | v_i \in \R, x_i \in S \}$. 
The scalar element of the $i^{th}$ row and $j^{th}$ column of $\A$ is denoted as $\A_{i}^j$ and the $j^{th}$ row and column of the matrix $\A$ are denoted as $\A_{j,:}$ and $\A_{:,j}$, respectively. The identity matrix and vector with all entries equal to $1$ of dimension $n$ are denoted as $\I_n$ and $\ones_n$, respectively. 

A graph $\mathcal{G}$ is denoted by a pair $(\V,\E)$ where $\V$ is a set of vertices (or nodes) and $\E$ is a set of edges, which are ordered subsets of two distinct elements of $\V$. If an edge from $j \in \V$ to $i \in \V$ exists then it is denoted as $(i,j)\in \E$. The set of neighboring sub-systems of node $i \in \mathcal{V}$ is called the neighborhood of node $i$ and is denoted by $\N_i = \{j \ | \ (i,j)\in \mathcal{E}\}$. In the subsequent, we use the terms agents, nodes, and sub-systems interchangeably. A continuous function $f: \mathbb{R}^p \to \mathbb{R}$ is called Lipschitz continuous with constant $L > 0$ if the following inequality holds:  $ |f(x) - f(y)| \leq L \|x-y\|, \ \forall \ x,y \in \mathbb{R}^p$. 
Given a norm $\|\cdot \|$ and a set $K \subset \mathbb{R}^p$, define the diameter of $K$ with respect to this norm as $Diam_{\|\cdot \|}(K) := \sup_{x, y \in K} \|x-y\|$. In the subsequent text the $O(.)$ and $o(.)$ operations denote the standard \textit{Big-O} and \textit{Little-o} notations respectively \cite{knuth1997art}. 

\section{Agent Based System Framework} \label{sec:dis_model_ID}
We consider a networked system represented by a graph $\G(\V,\E)$ consisting of $|\V|:= N$ nodes (or sub-systems). Each node $i$ has an actuator applying the control decision $\uu_i \in \mathbb{R}^{\du_i}$. Assume that $\dy$ number of sensors are deployed in the network $\G$. The measurements of all these sensors are sent to a fusion center that collects all the sensor outputs to create a measurement $\widehat{y}(t) \in \mathbb{R}^\dy$ of the true global observable $y(t)$. Every sub-system $i$ maintains a local estimate $\widehat{y}^i_{\theta_i}(t)$ of the global observable $y(t)$ via a model of the kind in~\eqref{eq:data_model}. In particular, 
\begin{align}\label{eq:individual_estimate}
    \widehat{y}^i_{\theta_i}(t) :=  \phi_i(\uu_i(t), \theta_i), \ \mbox{for all} \ i = 1,2,\dots, N,
\end{align}
where parameter $\theta_i \in \mathbb{R}^{\mathbf{d}_{\theta_i}}$, with $\sum_{i=1}^N \mathbf{d}_{\theta_i} = \mathbf{d}_{\theta} $. The output estimate of the network is defined as
\begin{align}\label{eq:aggregate_output}
    \widehat{y}_{\theta}(t) = \frac{1}{N} \sum_{i=1}^N \widehat{y}^i_{\theta_i}(t) = \frac{1}{N} \sum_{i=1}^N \phi_i(\uu_i(t), \theta_i).
\end{align}
Here we assume that the parametric model above is accurate
in the sense that there exists a $\theta^\star := [\theta_1^\star; \theta_2^\star, \dots; \theta_N^\star] \in \mathbb{R}^{\mathbf{d}_\theta}$ such that 
\begin{align}\label{eq:avg_model}
    \widehat{y}(t) = \widehat{y}_{\theta^\star}(t) = \frac{1}{N} \sum_{i=1}^N \phi_i(\uu_i(t), \theta^\star_i), \ \mbox{for all} \ t.
\end{align}
With~\eqref{eq:avg_model}, we formulate the following predictive control problem,
\begin{align}\label{eq:control_avg_model}
\minimize_{\uu(1), \uu(2), \dots, \uu(T) \in \mathbb{R}^\du} & \sum_{k=1}^T \sum_{i=1}^N \ell_i(\widehat{y}_\theta(k),k) + h_i(\uu_i(k), k) \\
\mbox{subject to} \ \uu_i(k) & \in \mathcal{U}_i, \mbox{ for all} \ i = 1,2,\dots, N, \ \ \mbox{for all} \ k. \nonumber   
\end{align}
Note that~\eqref{eq:control_avg_model} is equivalent to problem~\eqref{eq:control_prob} if $y(t) = \widehat{y}(t) = \widehat{y}_\theta(t)$ hold for all $t$. This is typically assumed in the state-of-the-art to solve problem~\eqref{eq:control_prob} (see \cite{bernstein2023time}, Assumption~4, \cite{bernstein2019online}, Assumption~5, for example). 
However, when the output estimate $\widehat{y}_\theta$ doesn't match with the true measurements of the system the control performance is affected adversely. In this article, we take the approach of adaptively improving the parametric output estimate of the sub-systems to maintain the validity of~\eqref{eq:avg_model}. Denote
$L(\theta) = |\sum_{k=1}^T \sum_{i=1}^N \ell_i(\widehat{y}_\theta(k),k) - \sum_{k=1}^T \sum_{i=1}^N \ell_i(\widehat{y}(k),k) |$ as the cost of model mismatch with the controller running over a horizon $T$. Our goal is to develop algorithms that minimize the model mismatch quantified by $L(\theta)$ in real-time so that the performance of the closed-loop controllers in solving~\eqref{eq:control_avg_model} is not compromised due to model mismatches.


\section{Model Update Problem and the Distributed Reformulation}\label{sec:model_update}
Given the criticality of~\eqref{eq:avg_model} we aim to reduce the model mismatch by finding the parameter $\theta$ that solves
\begin{align}
   \hspace{-0.08in} \minimize_{\theta} \ \rr(\theta) := \frac{1}{2}\sum_{t = 1}^T \norm{ \widehat{y}(t) - \frac{1}{N} \sum_{i=1}^N \phi_i(\uu_i(t), \theta_i) }^2.
\label{eq:original_problem}
\end{align}
\noindent For the subsequent development, we make the following assumptions: 

\begin{assmp}\label{assmp:bibo}
    The control decisions generated via problem~\eqref{eq:control_avg_model} ensure the stability of the networked system $\mathcal{S}$. 
\end{assmp}

\begin{assmp}\label{assmp:separable_model}
Functions $\phi_{i}$ in~\eqref{eq:individual_estimate} are proper, convex, and Lipschitz continuous with constant $L_{i}$ for all $i$.
\end{assmp}


\noindent 
Note that under the network model, each sub-system $i$ creates an estimate $\widehat{y}^i_{\theta_i}(t)$ to determine the effect of its regional control decisions $\uu_i(t)$ on the output (reflected in the measurements). Problem~\eqref{eq:original_problem} can be interpreted as a distributed optimization problem across a network of sub-systems as:
\begin{align}
   & \minimize_{\theta} \ \frac{1}{2N^2}\sum_{t = 1}^T \norm{\sum_{i=1}^N \left( \widehat{y}(t) -  \phi_i(\uu_i(t), \theta_i) \right) }^2.
\label{eq:intermediate_opt_problem}
\end{align}
where we rewrite $\widehat{y}$ by $\frac{1}{N}\sum_{i=1}^N\widehat{y}$ in~\eqref{eq:original_problem} to derive~\eqref{eq:intermediate_opt_problem}.
Optimization problem~\eqref{eq:intermediate_opt_problem} couples the parameters and data for all the agents. We next consider a reformulation described in~\cite{huang2022scalable} to set up a formulation for a distributed algorithm, allowing the sub-systems to do local computations and communicate with the neighboring sub-systems in the network $\G(\V,\E)$ to determine a solution for problem \eqref{eq:intermediate_opt_problem}.  Assuming the network $\G(\V,\E)$ is connected, let $\p \in \R^{N \times N}$ be a finite weight matrix associated with the graph satisfying the following assumption:
\begin{assmp}\label{assmp:null_space}
  $\nul\{ \p \} = \sn \{\ones_N \}$.   
\end{assmp}
\noindent A few examples of matrices $\p$ that satisfy Assumption~\ref{assmp:null_space} are:
\begin{itemize}
       \item[(i)] Laplacian matrix: The Laplacian matrix of the graph \cite{merris1994laplacian} is defined as:
           \begin{align*} 
               \p_{ij} &= \begin{cases}
               -1, & \text{if} \ (i,j) \in \E, \\
               |\N_i|, & \text{if} \ i = j, \\
               0, &  \text{otherwise}. 
               \end{cases}
           \end{align*}
           \item [(ii)] A matrix created via a column stochastic matrix: $\p = \I_N - \tilde{\p}^\top$, where, $\tilde{\p}$ is a column stochastic matrix such that $[\tilde{\p}] \in [0,1], \ones_N^\top \tilde{\p} = \ones_N$.
   \end{itemize}

\noindent Let $\Phi(\uu(t), \theta) := [\phi_1(\uu_1(t), \theta_1); \phi_2(\uu_2(t), \theta_2); \dots; $ $ \phi_N(\uu_N(t), \theta_N)] \in \R^{N\dy}, \widehat{\yb}(t) := [ \widehat{y}(t); \widehat{y}(t); \dots; \widehat{y}(t)] \in \R^{N\dy}, \widehat{\p}:= \p \otimes \I_{\dy}, \x = [\theta; w] \in \mathbb{R}^{\mathbf{d}_\theta + N\dy}$.
Consider the problem
\begin{align}\label{eq:modified_prob}
      \hspace{-0.06in} \argmin_{\x = [\theta; w] } \sum_{t = 1}^T \left[ f_t(\x) := \textstyle \frac{1}{2N^2} \norm{ \Phi(\uu(t), \theta) -  \widehat{\yb}(t) - \widehat{\p} w }^2 \right ].
\end{align}
Let $F(\x) := \sum_{t = 1}^T f_t(\x)$ for brevity of notation. Problem~\eqref{eq:modified_prob} allows for the objective function $f_t$ to be distributed across different sub-systems and allows for the synthesis of a distributed algorithm. In the next result, we establish that solving problem~\eqref{eq:modified_prob} indeed aids towards our objective of solving~\eqref{eq:original_problem}. Specifically, Lemma~\ref{lem:optimality_equi} shows that the  solution to~\eqref{eq:modified_prob} is also a solution for~\eqref{eq:intermediate_opt_problem} and thus provides a solution for~\eqref{eq:original_problem}. We make the following assumption,
\begin{assmp}\label{assmp:solution_set}
    The set of minimizing solution to problem~\eqref{eq:modified_prob}, $\argmin_{\x} F(\x)$, is non-empty and bounded. 
\end{assmp}

\begin{lemma}\longthmtitle{Optimal solutions of~\eqref{eq:intermediate_opt_problem} and~\eqref{eq:modified_prob}}
\label{lem:optimality_equi}
Let the matrix $\widehat{\p}$ in~\eqref{eq:modified_prob} be such that $\p$ satisfy Assumption~\ref{assmp:null_space}. If $\x^\star = [\theta^\star ; w^\star]$ is a solution to problem~\eqref{eq:modified_prob}, then $\theta^\star$ is also a solution to problem~\eqref{eq:intermediate_opt_problem}.
\end{lemma}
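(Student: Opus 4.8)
The plan is to reduce~\eqref{eq:modified_prob} to~\eqref{eq:intermediate_opt_problem} by \emph{partially minimizing the objective $F$ over the auxiliary variable $w$}. Set $\widetilde{F}(\theta):=\min_{w}F(\theta,w)$, which is attained under Assumption~\ref{assmp:solution_set}. Since $\x^\star=[\theta^\star;w^\star]$ is a joint minimizer of $F$, $\theta^\star$ is a minimizer of $\widetilde{F}$; hence it suffices to show that $\widetilde{F}$ equals, up to a fixed positive scalar, the objective of~\eqref{eq:intermediate_opt_problem}. Granting this, $\argmin_\theta\widetilde{F}$ is exactly the solution set of~\eqref{eq:intermediate_opt_problem}, so $\theta^\star$ solves~\eqref{eq:intermediate_opt_problem}, and by the elementary identity used to pass from~\eqref{eq:original_problem} to~\eqref{eq:intermediate_opt_problem} it also solves~\eqref{eq:original_problem}.

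To evaluate $\widetilde{F}$, fix $\theta$ and write $z_t:=\Phi(\uu(t),\theta)-\widehat{\yb}(t)\in\R^{N\dy}$. The minimization over $w$ decouples across $t$, and for each $t$ the inner problem $\min_{w}\norm{z_t-\widehat{\p}w}^2$ is a linear least--squares problem with optimal value $\norm{P z_t}^2$, where $P$ is the orthogonal projector onto $(\operatorname{range}\widehat{\p})^{\perp}=\nul(\widehat{\p}^{\top})$. The crux is to identify this subspace from Assumption~\ref{assmp:null_space}: since $\widehat{\p}=\p\otimes\I_{\dy}$ with $\nul\{\p\}=\sn\{\ones_N\}$, one has $\nul\{\widehat{\p}\}=\sn\{\ones_N\}\otimes\R^{\dy}$, and for the admissible weight matrices the averaging direction $\ones_N$ also spans the left null space of $\p$, so that $(\operatorname{range}\widehat{\p})^{\perp}=\sn\{\ones_N\}\otimes\R^{\dy}$ and $P=\tfrac{1}{N}(\ones_N\ones_N^{\top})\otimes\I_{\dy}$. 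Then $\norm{P z_t}^2=\tfrac{1}{N}\norm{(\ones_N^{\top}\otimes\I_{\dy})z_t}^2=\tfrac{1}{N}\norm{\sum_{i=1}^{N}\bigl(\phi_i(\uu_i(t),\theta_i)-\widehat{y}(t)\bigr)}^2$, and summing over $t$ and dividing by $2N^2$ gives $\widetilde{F}(\theta)=\tfrac{1}{N}\cdot\tfrac{1}{2N^2}\sum_{t=1}^{T}\norm{\sum_{i=1}^{N}\bigl(\widehat{y}(t)-\phi_i(\uu_i(t),\theta_i)\bigr)}^2$, i.e. precisely $1/N$ times the objective of~\eqref{eq:intermediate_opt_problem}, which closes the argument.

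The step requiring the most care is the structural identification of $(\operatorname{range}\widehat{\p})^{\perp}$. Assumption~\ref{assmp:null_space} constrains only $\nul\{\p\}$, whereas the least--squares step needs $\nul\{\p^{\top}\}$; for a general matrix these are unrelated, so one must exploit the specific form of the admissible $\p$ — symmetry of the graph Laplacian, or the construction $\p=\I_N-\widetilde{\p}^{\top}$ — to conclude that $\ones_N$ also spans $\nul\{\p^{\top}\}$. Only then does minimizing over $w$ act as the averaging map $\tfrac{1}{N}\ones_N\ones_N^{\top}\otimes\I_{\dy}$, which is the operation implicit in the coupled term of~\eqref{eq:intermediate_opt_problem}. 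A secondary point is that the $w$--minimization must remove exactly the $\operatorname{range}\widehat{\p}$--component of the residuals $z_t$, leaving the consensus--residual term defining~\eqref{eq:intermediate_opt_problem}; Assumption~\ref{assmp:solution_set} is invoked to guarantee that the relevant minima are attained and that the solution set of~\eqref{eq:intermediate_opt_problem} is nonempty.
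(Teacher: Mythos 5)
Your strategy of reducing \eqref{eq:modified_prob} to \eqref{eq:intermediate_opt_problem} by partially minimizing over $w$ is attractive, and the first step ($\theta^\star$ minimizes $\widetilde{F}$ whenever $[\theta^\star;w^\star]$ minimizes $F$) is sound, but there is a genuine gap at the sentence ``the minimization over $w$ decouples across $t$.'' In \eqref{eq:modified_prob} the auxiliary variable $w$ is a \emph{single} vector shared by all $T$ terms of $F(\x)=\sum_{t=1}^T f_t(\x)$, so $\min_w\sum_{t=1}^T\norm{z_t-\widehat{\p}w}^2$ is not $\sum_{t=1}^T\min_w\norm{z_t-\widehat{\p}w}^2$. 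Writing $P$ for the orthogonal projector onto $(\operatorname{range}\widehat{\p})^{\perp}$ and splitting $z_t=Pz_t+(\I-P)z_t$, the inner minimum actually equals $\sum_t\norm{Pz_t}^2+\min_{v\in\operatorname{range}\widehat{\p}}\sum_t\norm{(\I-P)z_t-v}^2$, and the second term is the ``variance'' $\sum_t\norm{(\I-P)z_t}^2-T\norm{\tfrac1T\sum_t(\I-P)z_t}^2$, which is in general strictly positive and depends on $\theta$. (Take $T=2$ with $(\I-P)z_1=-(\I-P)z_2\neq 0$: your formula predicts the $w$-minimization removes this component entirely, whereas the true minimum of that part is $2\norm{(\I-P)z_1}^2$.) Consequently $\widetilde{F}(\theta)$ is not a positive multiple of the objective of \eqref{eq:intermediate_opt_problem} but that objective plus an extra $\theta$-dependent term, and the claimed identification of $\argmin_\theta\widetilde{F}$ with the solution set of \eqref{eq:intermediate_opt_problem} does not follow. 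Your computation is valid only for $T=1$, or when the components $(\I-P)z_t$ happen to be independent of $t$.

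For comparison, the paper does not eliminate $w$; it works directly with the stationarity conditions $\nabla_\theta F(\x^\star)=0$ and $\nabla_w F(\x^\star)=0$, uses $\widehat{\p}^\top(\Phi-\widehat{\yb}-\widehat{\p}w^\star)=0$ to write the residual as $\ones_N\otimes z^\star$, and substitutes back into the $\theta$-equation to recover the optimality condition of \eqref{eq:intermediate_opt_problem} --- note that it, too, states these conditions separately for each $t$, which is precisely the per-$t$ decoupling your route would also need. Two of your observations are worth retaining: the per-$t$ algebra identifying the averaging projector $\tfrac{1}{N}(\ones_N\ones_N^\top)\otimes\I_{\dy}$ and the factor $1/N$ is correct; and your caveat that Assumption~\ref{assmp:null_space} constrains $\nul\{\p\}$ while the argument needs $\nul\{\p^\top\}$ (equivalently $\ones_N^\top\p=0$) is a real issue --- the paper's proof relies on the same facts, which hold for the symmetric Laplacian and for $\p=\I_N-\tilde{\p}^\top$ with $\tilde{\p}$ column stochastic, but do not follow from Assumption~\ref{assmp:null_space} alone.
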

\begin{proof}
Using the first order optimality conditions for~\eqref{eq:modified_prob} with convex function $F$ (sum of composition of convex and increasing functions), we have $\nabla_\theta F(\x^\star) = 0,\; 
    \nabla_w F(\x^\star) = 0$. Namely, for all $t \in \{ 1,2,\dots,T\}$,
\begin{align} 
    \nabla_\theta \Phi(\uu(t), \theta^\star)^\top  (\Phi(\uu(t), \theta^\star) -  \widehat{\yb}(t) - \widehat{\p} w^\star) & = 0, \label{eq:opt_mod_theta}\\
    \widehat{\p}^\top (\Phi(\uu(t), \theta^\star) -  \widehat{\yb}(t) - \widehat{\p} w^\star) & = 0. \label{eq:opt_mod_w}
\end{align}
Since, $\nul\{ \p \} = \sn \{\ones_N \}$, it follows from \eqref{eq:opt_mod_w} that there exists $z^\star$ such that 
\begin{align}\label{eq:dummy1}
    \ones_N \otimes z^\star :=  (\Phi(\uu(t), \theta^\star) -  \widehat{\yb}(t) - \widehat{\p} w^\star).
\end{align}
Multiplying both by $\ones_N^\top \otimes \I_\dy$ we get,
\begin{align*}
    N z^\star & = (\ones_N^\top \otimes \I_\dy) \ones_N \otimes z^\star \\
    & =  (\ones_N^\top \otimes \I_\dy) (\Phi(\uu(t), \theta^\star) -  \widehat{\yb}(t) - \widehat{\p} w^\star) \\ 
    & \textstyle \hspace{-0.3in} = \sum_{i=1}^N \big( \phi_i(\uu_i(t), \theta_i^\star)  - \widehat{y}(t) \big)  - (\ones_N^\top \otimes \I_\dy) (\p \otimes \I_{\dy}) w^\star \\
    &  \textstyle \hspace{-0.3in} = \sum_{i=1}^N \big( \phi_i(\uu_i(t), \theta_i^\star)  - \widehat{y}(t) \big) - (\ones_N^\top \otimes \p) (\I_\dy \otimes \I_{\dy}) w^\star \\
    &  \textstyle \hspace{-0.3in} = \sum_{i=1}^N \big( \phi_i(\uu_i(t), \theta_i^\star)  - \widehat{y}(t) \big),
\end{align*}
where we used Assumption~\ref{assmp:null_space} in the last step. Thus, 
\begin{align}\label{eq:dummy2}
    \textstyle z^\star = \frac{1}{N} \sum_{i=1}^N \big( \phi_i(\uu_i(t), \theta_i^\star)  - \widehat{y}(t) \big).
\end{align}
Substituting~\eqref{eq:dummy1} and \eqref{eq:dummy2} in \eqref{eq:opt_mod_theta}, gives: for all $t \in \{1,\dots,T\}$
\begin{align}\label{eq:dummy3}
    & \hspace{-0.08in} \nabla_{\theta_j} \textstyle \Phi(\uu_j(t), \theta^\star_j)^\top \Big[\sum_{i=1}^N \big( \phi_i(\uu_i(t), \theta_i^\star) \nonumber \\ 
    & \hspace{1in} - \widehat{y}(t) \big) \Big] = 0, \;\forall j \in \{1,2,\dots,N\}.
\end{align}
As problem~\eqref{eq:intermediate_opt_problem} is convex, by the optimality conditions, any $\theta^{'}$ is a solution of \eqref{eq:intermediate_opt_problem} if and only if, for all $t \in \{1,\dots,T\}$,
\begin{align*}
    & \hspace{-0.12in} \nabla_{\theta_j} \phi_j \textstyle (\uu_j(t),\theta^{'}_j)^\top \Big[\sum_{i=1}^N \big(\phi_i(\uu_i(t), \theta_i^{'}) \nonumber \\
    & \hspace{1in} \textstyle - \widehat{y}(t) \big) \Big] = 0, \ \forall j \in \{1,2, \dots, N\}.
\end{align*}
Thus, we conclude $\theta^\star$ is a solution to~\eqref{eq:intermediate_opt_problem}.  
\end{proof}
Using Lemma~\ref{lem:optimality_equi}, we can concentrate on solving~\eqref{eq:modified_prob}, which facilitates the development of distributed solutions, as will be demonstrated in the following section.

\section{Online Model Update Algorithm}\label{sec:sysid_alg}

Recall the problem statement in Section~\ref{sec:dis_model_ID}. Assume a local controller is available at each sub-system $i$ that solves problem~\eqref{eq:control_avg_model}. Starting at any time $t_0$ the local controller has access to the model with parameter $\theta^0$ that is used to estimate the output $\widehat{y}_{\theta^0}(t_0)$ for $[t_0, t_0 + T-1]$ based on which it generates local control decisions $\uu_i$ that are implemented in the system by the local actuators at some time $t_1 \geq t_0 + \Delta t$, where $\Delta t$ is the amount of time it takes to solve problem~\eqref{eq:control_avg_model}. At time $t_1$ a measurement $\widehat{y}(t_1)$ of the observable is obtained. If the model predicted output, $\widehat{y}_{\theta^0}(t_1)$, does not match the measurement, $\widehat{y}(t_1)$, the model parameter needs to be updated. We meet this objective by developing an online algorithm to solve the system model update problem~\eqref{eq:modified_prob}. The function $f_t (\x)$ in~\eqref{eq:modified_prob} is used to update the parameters to a new value at time $t$ given the measurement $\widehat{y}(t)$ and the control decisions $\uu_i(t)$. We aim to minimize the ``regret'' of the online algorithm compared to a model devised using all the input-output pairs in hindsight. Let $\{\x(t)\}_{t \geq 1}$ denote the solution parameters generated by our algorithm, we formally define regret of our algorithm after any time $T$ as,
\begin{align}\label{eq:regret}
    \rc_T := \sum_{t=1}^T f_t(\x(t)) - \min_{\x} \sum_{t=1}^T f_t(\x).
\end{align}
Note that if $\rc_T$ is zero, then the solution sequence $\{\x(t)\}_{t \geq 1}$ is such that the total error incurred is equal to the error obtained by minimizing the error objective function in~\eqref{eq:modified_prob} created by using the control decisions and measurement data over the entire time horizon $[t_0,t_0+T-1]$. We propose Algorithm~\ref{alg:onlinegd} to update $\x(t) = [\theta(t); w(t)]$ in an online manner.
\begin{algorithm}[ht]
    \SetKwBlock{Initialize}{Initialize:}{}
    \SetKwBlock{Input}{Input:}{}
    \SetKwBlock{Repeat}{For $ t = 0,1,2, \dots$}{}
    \Repeat {
    - Given $\x(t) = [ \theta (t); w (t)] \in \R^{\dtheta + N\dy }, \uu(t), \widehat{y}(t)$ \\
    - $ \x(t+1) = \x(t) - \eta_t \nabla f_t (\x(t)) $ 
    }
    \caption{Online Input-Output Map Update}
    \label{alg:onlinegd}
\end{algorithm}

\noindent Lemma~\ref{lem:gradbound} establishes the boundedness of the gradient steps involved in Algorithm~\ref{alg:onlinegd}.
\begin{lemma}\label{lem:gradbound}
    Let Assumptions~\ref{assmp:bibo}-\ref{assmp:solution_set} hold. There exists constants $\eta_t > 0$ and $\delta < \infty$ such that $\|\nabla f_t \| := \left \|\left [ \begin{array}{cc}
         \nabla_\theta f_t \\
         \nabla_w f_t 
    \end{array} \right ] \right \| \leq \delta$ for all $t \in \{1,\dots,T\}$ with $\x(t)$ updated by Algorithm~\ref{alg:onlinegd}
\end{lemma}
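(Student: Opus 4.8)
The plan is to exploit the explicit least–squares structure of $f_t$ in~\eqref{eq:modified_prob}. Write $r_t(\x) := \Phi(\uu(t),\theta) - \widehat{\yb}(t) - \widehat{\p}w$ for the residual, so that $\nabla_\theta f_t(\x) = \tfrac{1}{N^2}\,\nabla_\theta\Phi(\uu(t),\theta)^\top r_t(\x)$ and $\nabla_w f_t(\x) = -\tfrac{1}{N^2}\,\widehat{\p}^\top r_t(\x)$, whence $\|\nabla f_t(\x)\| \le \tfrac{1}{N^2}\big(\|\nabla_\theta\Phi(\uu(t),\theta)\| + \|\widehat{\p}\|\big)\,\|r_t(\x)\|$ (with $\nabla$ the (sub)gradient selection used in Algorithm~\ref{alg:onlinegd}). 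By Assumption~\ref{assmp:separable_model} each $\phi_i$ is $L_i$–Lipschitz, so every (sub)gradient of $\phi_i$ in $\theta_i$ has norm at most $L_i$; since $\nabla_\theta\Phi$ is block diagonal with these blocks, $\|\nabla_\theta\Phi(\uu(t),\theta)\|\le\bar L$ for a constant $\bar L$ depending only on $\{L_i\}$, uniformly in $t$ and $\theta$, while $\|\widehat{\p}\| = \|\p\otimes\I_{\dy}\|$ is a fixed finite number. Thus everything reduces to bounding $\|r_t(\x(t))\|$ along the iterates.

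Next I would produce an affine–in–$\x$ bound on $\|r_t(\x)\|$ with coefficients independent of $t$. Fix a minimizer $\x^\star=[\theta^\star;w^\star]$ of $F$, which exists by Assumption~\ref{assmp:solution_set}. By Assumption~\ref{assmp:bibo} the closed–loop signals stay uniformly bounded, in particular $\sup_t\|\uu(t)\|<\infty$ and $\sup_t\|\widehat{y}(t)\|=:M_y<\infty$, so $\|\widehat{\yb}(t)\|\le\sqrt{N}\,M_y$ and, by the Lipschitz property of the $\phi_i$, also $\sup_t\|\Phi(\uu(t),\theta^\star)\|=:M_\Phi<\infty$. Then, using the Lipschitz bound once more, $\|r_t(\x)\| \le \|\Phi(\uu(t),\theta)-\Phi(\uu(t),\theta^\star)\| + \|\Phi(\uu(t),\theta^\star)\| + \|\widehat{\yb}(t)\| + \|\widehat{\p}\|\big(\|w-w^\star\|+\|w^\star\|\big) \le A + B\|\x-\x^\star\|$ for constants $A,B$ that do not depend on $t$ or $\x$. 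Combined with the first step, $\|\nabla f_t(\x)\| \le \alpha + \beta\|\x-\x^\star\|$ for all $t$ and all $\x$, with $\alpha,\beta$ independent of $t$.

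Finally I would close the argument by induction over the finite horizon $\{1,\dots,T\}$. Fix any step sizes with $0<\eta_t\le\bar\eta$. From the update rule, $\|\x(t+1)-\x^\star\| \le \|\x(t)-\x^\star\| + \eta_t\|\nabla f_t(\x(t))\| \le (1+\bar\eta\beta)\|\x(t)-\x^\star\| + \bar\eta\alpha$; writing $a_t:=\|\x(t)-\x^\star\|$, a discrete Gr\"onwall estimate gives $a_t \le (1+\bar\eta\beta)^{t-1}\big(a_1+\alpha/\beta\big) \le (1+\bar\eta\beta)^{T}\big(a_1+\alpha/\beta\big) =: R$ for every $t\le T$, and therefore $\|\nabla f_t(\x(t))\| \le \alpha+\beta R =: \delta<\infty$ for all $t\in\{1,\dots,T\}$, which is the claim. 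I expect the one genuinely delicate point to be this bootstrap: the gradient bound presupposes that $\x(t)$ has not escaped to infinity, while $\x(t)$ is itself generated by gradient steps, and the circularity is broken only because the horizon is finite (equivalently, a square–summable schedule $\sum_t\eta_t^2<\infty$ yields the same conclusion with a horizon–independent $\delta$ via the same Gr\"onwall bound). A secondary care point is that Assumption~\ref{assmp:separable_model} must be made to deliver two distinct consequences at once — a uniform bound on $\nabla_\theta\Phi$ and at–most–linear growth of $\Phi(\uu(t),\cdot)$ in $\theta$ — which is where the constants $L_i$ feed into both $\bar L$ and $A,B$.
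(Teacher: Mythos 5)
Your proof is correct for the lemma as literally stated, but it takes a genuinely different route from the paper. The paper argues via sub-level sets: it first shows every sub-level set of the convex function $f_t$ is bounded (Claim~1), then uses a descent argument to show that for sufficiently small $\eta_t$ the iterate $\x(t+1)$ stays in the sub-level set $\{\x : f_t(\x)\le f_t(\x(t))\}$ (Claim~2), and from the resulting boundedness of the iterates deduces a uniform bound on the residual $\zb(t)$ and hence on $\nabla f_t$. You instead bypass any descent property: you bound $\|\nabla f_t(\x)\|\le \alpha+\beta\|\x-\x^\star\|$ using the Lipschitz/least-squares structure, and then control $\|\x(t)-\x^\star\|$ by a discrete Gr\"onwall recursion over the finite horizon. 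What your approach buys is robustness — it works for \emph{any} bounded step sizes, whereas the paper's Claim~2 rests on a somewhat informal Taylor/``sufficiently small $\eta_t$'' argument (which moreover presumes differentiability of $f_t$ and does not obviously chain across the changing sub-level sets $C_{f_t(\x(t))}$). You also correctly identify the bootstrap circularity as the delicate point.

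Two caveats. First, your $\delta = \alpha+\beta(1+\bar\eta\beta)^{T}(a_1+\alpha/\beta)$ grows \emph{exponentially} in $T$. The lemma only asserts existence of a finite $\delta$ for a fixed horizon, so this is formally fine, but Theorem~\ref{thm:gdonline_convg} feeds $\delta$ into the bound $\rc_T\le \frac{\Xi^2}{2\eta_T}+\frac{\delta^2}{2}\sum_t\eta_t$ and then claims $\rc_T/T\to 0$; with a $\delta$ exponential in $T$ that conclusion would fail, so the paper needs (and attempts, via the sub-level-set argument) a $T$-uniform $\delta$, which your argument does not deliver. Second, your parenthetical claim that $\sum_t\eta_t^2<\infty$ yields a horizon-independent $\delta$ via the same Gr\"onwall bound is wrong: the relevant product is $\prod_t(1+\eta_t\beta)\le\exp\bigl(\beta\sum_t\eta_t\bigr)$, which is bounded only if $\sum_t\eta_t<\infty$ — a condition violated by the schedule $\eta_t=c_1/\sqrt{t}$ used in Theorem~\ref{thm:gdonline_convg} (and indeed $\sum_t\eta_t^2=\sum_t c_1^2/t$ diverges as well). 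So if you want your route to support the regret analysis, you would need to replace the Gr\"onwall step by an argument giving iterate bounds uniform in $T$.
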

\begin{proof}
    We start by presenting three supporting claims that we later utilize to prove the desired result.\\[0.5ex]   

\hfill\begin{minipage}{0.98\linewidth}
    \textit{Claim 1: Any $\gamma$ sub-level set $C_\gamma := \{ \x \ | \ f_t(\x) \leq \gamma \}$ of $f_t$ is bounded.} 
    
    \textit{Proof.} Given $\beta > 0$, let $v^\star \in \argmin_{\x} f_t(\x)$, with  $\|v^\star\| < \infty$. Define, $\Gamma_\beta := \{\x \ | \ \|\x - v^\star\| = \beta\} $ and $v_\beta = \inf_{\x \in \Gamma_\beta} f_t (\x)$. Note that $\Gamma_\beta$ is non-empty and compact. Since, $f_t$ is continuous, from the Weierstrass's theorem $v_\beta$ is attained at some point of $\Gamma_\beta$, we have $v_\beta > f_t (v^\star)$. For any $\x$ such that $\|\x - v^\star\| > \beta$, let $\alpha = \frac{\beta}{\|\x - v^\star\|}, \tilde{\x} = (1 - \alpha) v^\star + \alpha \x$. By convexity of $f_t$, we have
    \begin{align*}
        (1 - \alpha) f_t(v^\star) + \alpha f_t(\x) \geq f_t (\tilde{\x}).
    \end{align*}
Since $\| \tilde{\x} - v^\star\| = \alpha \| \x - v^\star\| = \beta$, $\tilde{\x} \in \Gamma_\beta$ and 
\begin{align*}
    f_t(\tilde{\x}) \geq v_\beta = \inf_{\x \in \Gamma_\beta} f_t(\x). 
\end{align*}
Combining the above two relations, we get
\begin{align*}
     f_t(\x) & \textstyle \geq \frac{f_t(\tilde{\x}) - f_t (v^\star)}{\alpha} + f_t(v^\star)  \geq  f_t(v^\star) + \frac{v_\beta - f_t (v^\star)}{\alpha}\\
     & = \textstyle f_t(v^\star) + \frac{v_\beta - f_t (v^\star)}{\beta} \|\x - v^\star\|.
\end{align*}
Because $v_\beta > f_t(v^\star)$ and $f_t(\x) \leq \gamma$, we derive
\begin{align*}
\|\x - v^\star\| \leq \textstyle \frac{\beta(\gamma - f_t(v^\star))}{v_\beta - f_t (v^\star)}.
\end{align*}
Thus, $\|\x - v^\star\| \leq \max \left \{ \beta, \frac{\beta(\gamma - f_t(v^\star))}{v_\beta - f_t (v^\star)} \right \}. \hspace{0.35in} \hspace*{\fill} \qed$
\end{minipage}
\vspace{0.1in}

\hfill\begin{minipage}{0.98\linewidth}
\textit{Claim 2: There exists a sufficiently small $\eta_t > 0$ such that for all $t \in \{1,2,\dots,T\}$, $\x(t+1)$ updated by Algorithm~\ref{alg:onlinegd} lies in the sub-level set $C_{f_t (\x(t))} := \{ \x \ | \ f_t(\x) \leq f_t (\x(t)) \} $.}

\textit{Proof.} By Taylor series expansion and $f_t \geq 0$,
\begin{align*}
    f_t(\x(t+1)) &= f_t \big(\x(t) - \eta_t \nabla f_t (\x(t))\big)\\
    & \hspace{-0.7in} = f_t(\x(t)) - \eta_t \|\nabla f_t (\x(t))\|^2 + o(\eta_t \nabla f_t (\x(t)))\\
    & \hspace{-0.7in} = \textstyle f_t(\x(t)) \! - \! \eta_t \left( \! \|\nabla f_t (\x(t))\|^2 \! + \! \frac{o(\eta_t \nabla f_t (\x(t)))}{\eta_t} \! \right) \! \\
    & \hspace{-0.7in} \leq f_t (\x(t)),
\end{align*}
for sufficiently small $\eta_t > 0$ by the definition of $o(\eta_t)$, which completes the proof. $\hspace*{\fill} \qed$
\end{minipage}
\vspace{0.1in}

\hfill\begin{minipage}{0.98\linewidth}
\textit{Claim 3: Let $\zb(t):= \Phi(\uu(t), \theta(t)) -  \widehat{\yb}(t) - \widehat{\p} w(t) $. Then, $\exists  \ \bar{\zb} <\infty$ such that $\|\zb(t)\| \leq \overline{\zb}$ for all $t \in \{1,2,\dots,T\}$.}

\textit{Proof.} Under Assumptions~\ref{assmp:bibo} and~\ref{assmp:separable_model}, 
\begin{align*}
    & \|\zb(t) - \zb(1)\| = \| \Phi(\uu(t), \theta(t)) -  \widehat{\yb}(t) - \widehat{\p} w(t) \ - \\
    & \hspace{1.2in} \Phi(\uu(1), \theta(1)) +  \widehat{\yb}(1) + \widehat{\p} w(1)\|\\
    & \leq \| \big(\Phi(\uu(t), \theta(t)) - \widehat{\p} w(t) \big) \ -  \big(\Phi(\uu(1), \theta(1)) - \widehat{\p} w(1)\big) \| \\
    & \hspace{0.2in} + \| \widehat{\yb}(t) -  \widehat{\yb}(1)\| \\
    & =  L_{m} \left \| \left [ \!\!\begin{array}{cc}
    \uu(t)  \\
    \x(t)
    \end{array} \!\! \right ]- \left [ \!\! \begin{array}{cc}
    \uu(1)  \\
    \x(1)
    \end{array} \!\! \right ]  \right \| + \| \widehat{\yb}(t) -  \widehat{\yb}(1)\|. 
\end{align*}   
Therefore, $\|\zb(t) - \zb(1)\| \leq 2 \overline{\yy} + 2 L_m (\overline{\uu} + D_m)$,
where the results of Claims~{1} and~{2} are applied with $D_m : = \max_t Diam_{\|.\|}(C_{f_t(\x(t))}) $ and $L_m := \max_{1 \leq i \leq N} L_i$. Therefore, there exists a $\bar{\zb} < \infty$ that bounds $\|\zb(t)\|$ for all $t \in \{1,2,\dots,T\}.  \hspace*{\fill} \qed$ 
\end{minipage}\\[1ex]

\noindent With all the claims, we circle back to the proof of Lemma~\ref{lem:gradbound}. At any time index $t$, 
\begin{align*}
    \nabla f_t & = \left [ \begin{array}{cc}
         \nabla_\theta f_t \\
         \nabla_w f_t 
    \end{array} \right ] 
     =\left [ \begin{array}{cc}
         \nabla_\theta \Phi(\uu(t), \theta(t))^\top  \zb(t) \\
        -\widehat{\p}^\top \zb(t) 
    \end{array} \right ]. 
\end{align*}
Thus, $\|\nabla f_t\|^2 \leq \|\nabla_\theta \Phi(\uu(t), \theta(t))^\top  \zb(t) \|^2 + \|\widehat{\p}^\top \zb(t) \|^2 + 2 \|\nabla_\theta \Phi(\uu(t), \theta(t))^\top  \zb(t) \|\|\widehat{\p}^\top \zb(t) \| \leq (N L_m + \|\widehat{\p}\|)^2 \overline{\zb}^2$, where we used Assumption~\ref{assmp:separable_model} and claim~{3}. Hence, $\|\nabla f_t\| \leq (N L_m + \|\widehat{\p}\|) \overline{\zb} := \delta$. This completes the proof.
\end{proof}

\begin{theorem}\longthmtitle{Regret of Algorithm~\ref{alg:onlinegd}} \label{thm:gdonline_convg}
Let Assumptions~\ref{assmp:bibo}-\ref{assmp:solution_set} hold. Let $\x^\star \in \argmin_{\x} \sum_{t=1}^T f_t(\x)$ and $\eta_t = \frac{c_1}{\sqrt{t}}$, $c_1 > 0$. Then, the regret of Algorithm~\ref{alg:onlinegd} after any time $T$ is bounded. In particular,
\begin{align*}
    \rc_T = \frac{\delta_1 \sqrt{T}}{2} - \frac{\delta_2}{2} = O(\sqrt{T}), 
\end{align*}
where, $\delta_1$ and $\delta_2$ are some positive finite constants. Therefore, $\limsup_{T \to \infty} \rc_T/T \rightarrow 0$.
\end{theorem}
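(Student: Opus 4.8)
The plan is to follow the standard online gradient descent regret analysis, adapted to the fact that the iterates stay in a bounded region (which is exactly what Lemma~\ref{lem:gradbound} and its supporting claims give us). First I would write down the one-step progress inequality: since $\x(t+1) = \x(t) - \eta_t \nabla f_t(\x(t))$, expanding $\norm{\x(t+1) - \x^\star}^2$ yields
\begin{align*}
\norm{\x(t+1) - \x^\star}^2 = \norm{\x(t) - \x^\star}^2 - 2\eta_t \nabla f_t(\x(t))^\top(\x(t) - \x^\star) + \eta_t^2 \norm{\nabla f_t(\x(t))}^2.
\end{align*}
By convexity of $f_t$ (Assumption~\ref{assmp:separable_model}), $\nabla f_t(\x(t))^\top(\x(t) - \x^\star) \ge f_t(\x(t)) - f_t(\x^\star)$, and by Lemma~\ref{lem:gradbound}, $\norm{\nabla f_t(\x(t))}^2 \le \delta^2$. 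Rearranging gives the per-round bound
\begin{align*}
f_t(\x(t)) - f_t(\x^\star) \le \frac{\norm{\x(t) - \x^\star}^2 - \norm{\x(t+1) - \x^\star}^2}{2\eta_t} + \frac{\eta_t \delta^2}{2}.
\end{align*}

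Next I would sum this over $t = 1,\dots,T$. The second term sums to $\frac{\delta^2}{2}\sum_{t=1}^T \eta_t = \frac{c_1 \delta^2}{2}\sum_{t=1}^T t^{-1/2} \le c_1 \delta^2 \sqrt{T}$, using the standard integral bound $\sum_{t=1}^T t^{-1/2} \le 2\sqrt{T}$. The first term is handled by an Abel summation / telescoping argument with the varying step size: write $\sum_{t=1}^T \frac{1}{2\eta_t}\big(\norm{\x(t)-\x^\star}^2 - \norm{\x(t+1)-\x^\star}^2\big)$ and regroup to collect the coefficient of each $\norm{\x(t)-\x^\star}^2$, obtaining $\frac{1}{2\eta_1}\norm{\x(1)-\x^\star}^2 + \sum_{t=2}^T \big(\frac{1}{2\eta_t} - \frac{1}{2\eta_{t-1}}\big)\norm{\x(t)-\x^\star}^2 - \frac{1}{2\eta_T}\norm{\x(T+1)-\x^\star}^2$. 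Since $\eta_t$ is nonincreasing, the differences $\frac{1}{2\eta_t} - \frac{1}{2\eta_{t-1}}$ are nonnegative, so I bound each $\norm{\x(t)-\x^\star}^2$ by $D^2 := \sup_t \norm{\x(t) - \x^\star}^2$, which is finite because Claims~1 and~2 confine the iterates to a bounded set (combined with Assumption~\ref{assmp:solution_set} bounding $\x^\star$). The telescoping of the coefficient differences then collapses to $\frac{D^2}{2\eta_T} = \frac{D^2 \sqrt{T}}{2c_1}$, and subtracting the nonnegative final term $\frac{1}{2\eta_T}\norm{\x(T+1)-\x^\star}^2$ is exactly where the $-\delta_2/2$ correction in the statement comes from.

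Collecting the two pieces gives $\rc_T \le \big(\frac{D^2}{2c_1} + c_1\delta^2\big)\sqrt{T} - \frac{1}{2\eta_T}\norm{\x(T+1)-\x^\star}^2$, which matches the claimed form $\frac{\delta_1\sqrt{T}}{2} - \frac{\delta_2}{2}$ with $\delta_1 = \frac{D^2}{c_1} + 2c_1\delta^2$ and $\delta_2 = \frac{1}{\eta_T}\norm{\x(T+1)-\x^\star}^2$ (both positive and finite); hence $\rc_T = O(\sqrt{T})$ and $\limsup_{T\to\infty} \rc_T/T = 0$.

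The main obstacle, and the step that genuinely requires the machinery developed earlier, is justifying that $D = \sup_t \norm{\x(t) - \x^\star} < \infty$ \emph{uniformly in $t$} — ordinary OGD analyses assume a bounded feasible set, but here the domain is all of $\R^{\dtheta + N\dy}$ and there is no projection step. This is precisely what Claim~2 (each step keeps $f_t(\x(t+1)) \le f_t(\x(t))$ for small $\eta_t$) plus Claim~1 (sub-level sets of $f_t$ are bounded) buy us, together with the $w$-direction argument in Claim~3; I would need to be slightly careful that the bounding set can be taken independent of $t$, which follows from taking $D_m := \max_t Diam_{\norm{\cdot}}(C_{f_t(\x(t))})$ finite as in the proof of Lemma~\ref{lem:gradbound}. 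A secondary minor point is ensuring the step-size recursion in Claim~2 is compatible with the global choice $\eta_t = c_1/\sqrt{t}$; this is fine because ``sufficiently small $\eta_t$'' only requires $c_1$ small enough, and the regret bound's constants absorb the resulting dependence.
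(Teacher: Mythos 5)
Your proposal is correct and follows essentially the same route as the paper: the standard online-gradient-descent one-step expansion, convexity of $f_t$, the gradient bound $\delta$ from Lemma~\ref{lem:gradbound}, the Abel/telescoping regrouping of the $\frac{1}{2\eta_t}$-weighted differences bounded via the uniform iterate bound supplied by Claims~1 and~2, and the integral bound on $\sum_t t^{-1/2}$. The only cosmetic difference is where the $-\delta_2/2$ term is attributed (you take it from the dropped $-\frac{1}{2\eta_T}\norm{\x(T+1)-\x^\star}^2$ term, the paper from the $1-2c_1$ remainder of the integral bound), which does not affect the $O(\sqrt{T})$ conclusion.
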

\begin{proof}
Let $\x^\star \in \argmin_{\x} \sum_{t=1}^T f_t(\x)$. Consider the update in Algorithm~\ref{alg:onlinegd}, $\x(t+1) - \x^\star = \x(t) - \eta_t \nabla f_t (\x(t)) - \x^\star$, then
\begin{align}
    \|\x(t+1) - \x^\star\|^2 &\leq \|\x(t) -  \x^\star\|^2 + \eta_t^2 \|\nabla f_t(\x(t))\|^2 \nonumber \\
    & \hspace{0.4in} - 2\eta_t \nabla f_t (\x(t))^\top (\x(t) - \x^\star). \nonumber
\end{align}
From Lemma~\ref{lem:gradbound}, there exists $\delta =: \sup_{t} \|\nabla f_t (\x(t))\| < \infty$,
\begin{align}
    &\|\x(t+1) - \x^\star\|^2 \leq \|\x(t) -  \x^\star\|^2 + \eta_t^2 \delta^2 \nonumber \\
    & \hspace{0.10in} - 2\eta_t \nabla f_t (\x(t))^\top (\x(t) - \x^\star) \nonumber \\
   &\Longrightarrow \nabla f_t (\x(t))^\top (\x(t) - \x^\star) \label{eq:gdonline_1} \\ &\hspace{0.3in}\leq \frac{\|\x(t) -  \x^\star\|^2 -  \|\x(t+1) - \x^\star\|^2}{2 \eta_t}  + \frac{\eta_t\delta^2}{2} \nonumber
\end{align}
By convexity of $f_t$, 
\begin{align}
    f_t(\x(t)) - f_t(\x^\star) \leq \nabla f_t (\x(t))^\top (\x(t) - \x^\star). \label{eq:gdonline_2}
\end{align}
Combining~\eqref{eq:gdonline_1} and~\eqref{eq:gdonline_2} gives
\begin{align*}
    f_t(\x(t)) \! - \! f_t(\x^\star) & \leq \frac{\|\x(t) \!- \! \x^\star\|^2 -  \|\x(t \! + \! 1) \! - \! \x^\star\|^2}{2 \eta_t}  + \frac{\eta_t\delta^2}{2}.
\end{align*}
Summing over $t = 1$ to $T$, 
\begin{align*}
    \rc_T & = \sum_{t=1}^T f_t(\x(t)) - \sum_{t=1}^T f_t(\x^\star) \\
    & \hspace{-0.2in} \leq \sum_{t=1}^T \left(\frac{\|\x(t) -  \x^\star\|^2 -  \|\x(t+1) - \x^\star\|^2}{2 \eta_t} \right) + \sum_{t=1}^T \frac{\eta_t\delta^2}{2} \\
    & \hspace{-0.2in} = \frac{\|\x(1) -  \x^\star\|^2}{2\eta_1} - \frac{\|\x(T+1) -  \x^\star\|^2}{2\eta_T} + \frac{\delta^2}{2} \sum_{t=1}^T \eta_t \\
    & + \frac{1}{2} \sum_{t = 2}^T \|\x(t) -  \x^\star\|^2 \left (\frac{1}{\eta_t} - \frac{1}{\eta_{t-1}} \right).
\end{align*}
From claims~1 and~2 in Lemma~\ref{lem:gradbound}, there exists $\Xi < \infty$ such that $\sup_{t} \|\x(t) - \x^\star\| \leq \sup_{t}  \|\x(t)\| + \|\x^\star\| \leq \Xi$. Therefore, 
\begin{align*}
    \rc_T & \textstyle \leq \frac{\Xi^2}{2} \left ( \frac{1}{\eta_1} +  \sum_{t = 2}^T \left (\frac{1}{\eta_t} - \frac{1}{\eta_{t-1}} \right) \right) + \frac{\delta^2}{2} \sum_{t=1}^T \eta_t \\
    &  = \textstyle \frac{\Xi^2}{2 \eta_T} + \frac{\delta^2}{2} \sum_{t=1}^T \eta_t.
\end{align*}
For $\eta_t = \frac{c_1}{\sqrt{t}}, \sum_{t=1}^T \eta_t = \sum_{t=1}^T \frac{c_1}{\sqrt{t}} \leq 1 + \int_{t=1}^T \frac{c_1}{\sqrt{t}} dt \leq 1 + [2c_1\sqrt{t}]_{1}^T \leq 2c_1\sqrt{T} + 1 - 2c_1$. Thus,
\begin{align*}
    \rc_T \leq \frac{(\Xi^2/c_1 + 2\delta^2c_1)\sqrt{T}}{2} - \frac{(2c_1-1)\delta^2}{2}.
\end{align*}
Therefore, $\limsup_{T \to \infty} \rc_T/T \rightarrow 0$.
\end{proof}

The result of Theorem~\ref{thm:gdonline_convg} establishes that Algorithm~\ref{alg:onlinegd} provides an estimated output close to the estimated output derived via a best-fixed model in hindsight and thus solves the adaptive model update problem. Next, we elucidate a methodology for implementing the Algorithm~\ref{alg:onlinegd} within a distributed framework. 

Up to this point, we have presented Algorithm~\ref{alg:onlinegd} to solve the model update problem under online experimental scenarios. In the following, we present how Algorithm~\ref{alg:onlinegd} can be implemented distributively. Consider a communication network $\G^c(\V^c,\E^c)$ with $ \V^c = \V \cup \{0\}, |\V^c| = N+1$, $\E^c = \E \cup \{(0,1), (0,2), \dots, (0,N)\} \subseteq (N+1) \times (N+1)$. The node index $0$ is the fusion center to which all the sensor measurements are relayed. There are two kinds of communication links in the graph $\G^c$ (a) $(i,j) \in \E$ with $i,j \in \{1,2,\dots, N\}$ and (b) $(0,j) \in \E^c$ with $j \in \{1,2,\dots, N\}$. The sub-systems communicate with each other via the link of kind (a) and the fusion center communicates with all the sub-systems via the communication links of the form (b). The fusion center communicates the measurement $\widehat{y}(t) \in \mathbb{R}^\dy$ to all the sub-systems $i \in \{1,2,\dots,N\}$. The updates in Algorithm~\ref{alg:onlinegd} utilizes $\nabla f_t (\x(t))$. From~\eqref{eq:opt_mod_theta} and~\eqref{eq:opt_mod_w}, we have 
\begin{align}\label{eq:for_grad_dist_impl}
    \nabla f_t (\x(t)) =\left [ \begin{array}{cc}
         \nabla_\theta \Phi(\uu(t), \theta(t))^\top  \zb(t) \\
        -\widehat{\p}^\top \zb(t)
    \end{array} \right ]. 
\end{align}
Note that $\zb(t)$ can be decomposed as, $\zb = [\zb_1(t); \zb_2(t); \dots; $ $\zb_N(t)]$, where for $i,j \in \{1,2,\dots,N\}$,
\begin{align*}
\zb_i(t) = \phi_i (\uu_i(t), \theta_i(t)) - \widehat{y}(t) - ( \widehat{\p}_{ii} w_{i}(t) + \sum_{j \in \N_i} \widehat{\p}_{ij} w_{j}(t)).     
\end{align*}
A closer examination of~\eqref{eq:for_grad_dist_impl} yields that $\nabla f_t$ can be further written as, $ \nabla f_t = [(\nabla_1 f_t)^\top; (\nabla_2 f_t)^\top; \dots; (\nabla_N f_t)^\top]$, where
\begin{align}\label{eq:grad_dist_impl}
    \nabla_i f_t = \left [ \begin{array}{cc}
         \nabla_{\theta_i} \phi(\uu_i(t), \theta_i(t))^\top \zb_i(t) \\
         -\widehat{\p}_{ii} \zb_{i}(t) - \sum_{j \in \N_i} \widehat{\p}_{ji} \zb_{j}(t)
    \end{array} \right ],
\end{align}
for all $i \in \{1,2,\dots,N\}.$ Thus, using~\eqref{eq:grad_dist_impl} the updates in Algorithm~\ref{alg:onlinegd} can be implemented in a distributed manner at any sub-system $i$ while maintaining an auxiliary variable $\zb_i$ as shown in Algorithm~\ref{alg:onlinegd_dist}. 

\begin{algorithm}[h!]
    \SetKwBlock{Initialize}{Initialize:}{}
    \SetKwBlock{Input}{Input:}{}
    \SetKwBlock{Repeat}{For $ t = 0,1,2, \dots$}{}
    \SetKwBlock{Agent}{For all $i \in \{1,2,\dots,N\}$}{}
    \Repeat {
                \hspace{-0.12in} - Receive $\widehat{y}(t)$ from the fusion center\\
                \hspace{-0.12in} - Given $\theta_i(t) \in \R^{\theta_i}, w_i(t), w_j(t) \in \R^{\dy}, j \in \N_i$\\
                \hspace{-0.12in} - $\zb_i (t) = \phi_i (\uu_i(t), \theta_i(t)) - \widehat{y}(t) - \displaystyle \sum_{j \in \N_i \cup \{i\}} \widehat{\p}_{ij} w_{j}(t)$ \\
                \hspace{-0.12in} - $\theta_i(t+1) = \theta_i(t) - \eta_t \nabla_{\theta_i} \phi_i(\uu_i(t), \theta_i(t))^\top \zb_i(t)$ \\
                \hspace{-0.12in} - $w_i(t+1) = w_i(t) + \eta_t \displaystyle \sum_{j \in \N_i \cup \{i\}} \widehat{\p}_{ji} \zb_{j}(t)$
    }
    \caption{Distributed Online Input-Output Map Update at Sub-system $i$}
    \label{alg:onlinegd_dist}
\end{algorithm}
In Algorithm~\ref{alg:onlinegd_dist}, each agent $i$ engages in two rounds of communication on auxiliary variables $\zb_i$ and $w_i$. Importantly, the exchange of $\zb_i$ and $w_i$ among agents does not allow for the reconstruction of the model parameters $\theta_i$ or the local input data. As a result, the information transmitted across the communication network does not divulge any direct details regarding the sub-system's parameters or local data, thereby bolstering the privacy and security of the individual sub-systems.

\section{Controller with Online Model Updates}\label{sec:online_controller}
In this section, we provide an extension of the control problem~\eqref{eq:control_avg_model} beyond the fixed input-output map for the entire time horizon $T$. We consider the input-output map in~\eqref{eq:control_avg_model} to be frequently updated based on the latest control decision and measurement. The following formulation captures the time-varying aspects
\begin{align}\label{eq:control_online_model}
\minimize_{\uu(1), \dots, \uu(T) \in \mathbb{R}^\du} & \sum_{k=1}^T \sum_{i=1}^N \ell_i(\widehat{y}_{\theta_i(t)}(k),k) + h_i(\uu_i(k), k) \\
\mbox{subject to} \ & \uu_i(k) \in \mathcal{U}_i, \mbox{ for all} \ i = 1,2,\dots, N, \forall k, \nonumber  
\end{align}
where, $t$ is the model update counter. Whenever, the input-output map is updated the counter $t$ is increased by $1$. Given $t$, let $\tilde{\ell}_i (\widehat{y}_{\theta_i},T) := \sum_{k=1}^T \ell_i(\widehat{y}_{\theta_i(t)}(k),k), \tilde{h}_i (\uu_i,T) := \sum_{k=1}^T  h_i(\uu_i(k),k)$. While the input-output map is fixed the control decisions in problem~\eqref{eq:control_online_model} can be solved for via a projected gradient iteration given by, for all $i \in \{1,2,\dots,N\}$,
\begin{align}\label{eq:proj_gradient}
    & \uu_i(\tau + 1) =  \textstyle \Proj_{\mathcal{U}_i} \big\{\uu_i(\tau) - \\    
    &  - \alpha \big[\nabla_{\widehat{y}_{\theta_i}} \tilde{\ell}_i(\widehat{y}_{\theta_i(t)}(\tau),T)^\top\nabla_{\uu_i} \widehat{y}_{\theta_i} (\uu_i(\tau));
    \nabla_{\uu_i} \tilde{h}_i(\uu_i(\tau),T)\big] \big \}, \nonumber 
\end{align}
where $\tau$ is the iteration counter of the projected gradient steps and $\Proj_{\mathcal{U}_i}\{\cdot \}$ denote the projection operator. The control decisions obtained via~\eqref{eq:proj_gradient} are implemented in the system. Further, after $T_{con}$ consecutive iterations of the application of the control decisions to the system, the model update counter $t$ is incremented by $1$ and the input-output map is updated via Algorithm~\ref{alg:onlinegd_dist} utilizing the current measurement. Subsequently, the projected gradient iterations are reinitialized and performed with the new input-output map corresponding to the parameters $\theta_i(t+1)$. We summarize this in Algorithm~\ref{alg:online_model_and_control}.
\begin{algorithm}[h!]
    \SetKwBlock{Initialize}{Initialize:}{}
    \SetKwBlock{Input}{Input:}{}
    \SetKwBlock{Repeat}{Repeat}{}
    \SetKwBlock{Agent}{For all $i \in \{1,2,\dots,N\}$}{}
    \Initialize{$t = \tau = 1, \uu_i(0), \theta_i(0), w_i(0)$ for all $i$}
    \Repeat {
        \uIf{$\mod(\tau,T_{con}) = 0$}{

            - Update the input-output map via Algorithm~\ref{alg:onlinegd_dist} \\
            - $ t = t+1$ \\
            - $\uu(\tau) = \uu(0)$
        }
        \uElse{
            - Compute $\uu(\tau)$ using~\eqref{eq:proj_gradient}\\
            - Apply $\uu(\tau)$ to the system\\
            
        }
    - $\tau = \tau + 1$    
    }
    \caption{Control with Online Input-Output Map Update for Solving Problem~\eqref{eq:control_online_model}}
    \label{alg:online_model_and_control}
\end{algorithm}

\section{Application Example and Numerical Simulations }\label{sec:num_sim}
In this section, we provide an example of an application of the proposed framework and present numerical results of a numerical simulation of the application example. 

\subsection{Power System Application Example}
Consider the problem of controlling and optimally managing the operation of a distribution power grid with penetration of photovoltaic energy sources (PES). We formulate this as a voltage regulation problem that fits the formulation in problem~\eqref{eq:control_online_model}. We assume that there are $N$ number of PES bus and they can adjust their power injections for voltage regulation. The control decision at PES bus $i$ at any time $k$ is given by $\uu_i(k):= [P_i(k); Q_i(k)] \in \mathbb{R}^2$, where, $P_i(k)$ and $Q_i(k)$ are the net active and reactive power injections at the PES bus $i$ at time $t$, respectively. 
The sets $\mathcal{U}_i := \{ [P_i;Q_i] : P^2_i + Q^2_i \leq S^2_{i,\mbox{max}}, 0 \leq P_i \leq \overline{P}_i \} $, where $S^2_{i,\mbox{max}}$ is the rated apparent power for the of the PES $i$ and $\overline{P}_i$ is the
maximum real power available with PES $i$. The input-output $\widehat{y}_{\theta(t)}(k) = \frac{1}{N} \sum_{i=1}^N  \phi_i(\uu_i(k), \theta_i(k))$ gives the mapping from power injections $\uu_i$ to the magnitudes of the voltages in the entire distribution grid. The functions $\ell_i(.)$ are designed to capture the engineering constraint of keeping the true voltages within the interval $[0.95,1.05]$. Quadratic functions $\uu_i(k) \to h_i(\uu_i(k),k)$ that penalize active power
curtailment and reactive power injections at the PES buses at time $k$ are chosen.

\begin{figure}[t] 
	\centering	
	\includegraphics[scale=0.75,trim={3.1cm 1.75cm 2.2cm 1.8cm},clip] {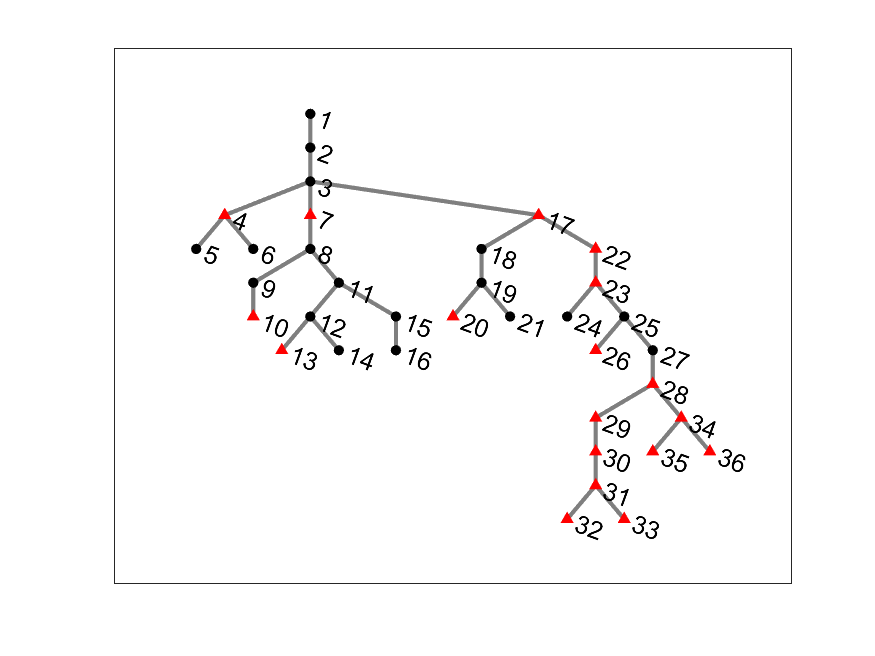}
	\caption{Schematic of the modified IEEE 37 bus system. The buses
		highlighted in red triangles are PES buses}
	\label{fig:modIEEE37} 
\end{figure}

\begin{figure}[b] 
	\centering
	\includegraphics[scale=0.4,trim={0.2cm 0.9cm 0.2cm -0.6cm},clip] {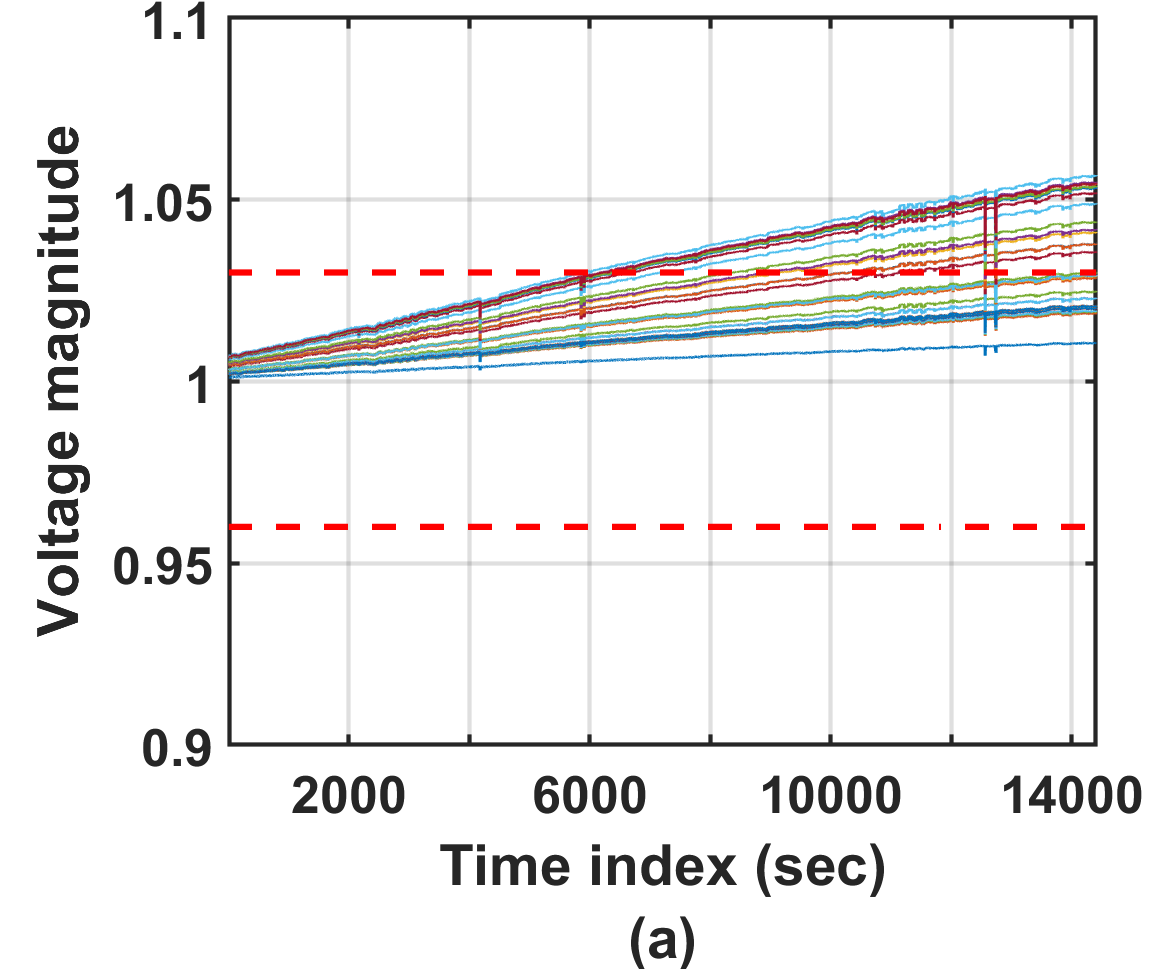}
	\caption{The voltage magnitudes ($p.u.$) over time in the modified IEEE $37$-bus system without any control.}	\label{fig:online_performance_without_control} 
\end{figure}

\subsection{Illustrative Numerical Simulations}
In this section, we instantiate the voltage regulation problem detailed in the previous section via a modified IEEE-$37$ bus system augmented with additional PES, using the solar irradiance data from Anatolia, CA, USA, and electric loads, having realistic load profiles for $4$ hours with a granularity of one second, introduced at different buses as illustrated in Fig.~\ref{fig:modIEEE37}. A total of $18$ PES buses are considered. We utilize Algorithm~\ref{alg:online_model_and_control} with $T_{con} = 1$ so that the input-output map between the power injections and the voltage magnitudes is updated every subsequent control decision. During the simulation study, PES exchange estimates through a connected communication network with $19$ nodes ($18$ PES and one fusion center). The graph Laplacian of the graph with $18$ nodes is used as the weight matrix $\p$. We compare two  parametric models for capturing the input-output map:
\begin{enumerate}
    \item A linear model where nodal active and reactive power injections serve as inputs and bus voltages as outputs, described by $\widehat{y}_{\theta_i(t)}(t) = A_i\uu_i(t)$ for all $t$. This model is known as the LinDistFlow model \cite{baran1989network}, and the goal is to identify matrix $A_i$.
    \item A non-linear model that posits a polynomial relationship between local power injections and bus voltage, described by $\widehat{y}_{\theta_i(t)}(t) = \frac{B_i - \sqrt{ B_i^2 - 4 (C_i - \overline{\uu}_i(t))}}{2}$ for all $t$, where $\overline{\uu}_i = \sqrt{P_i^2 + Q_i^2}$ is the apparent power magnitude at bus $i$. This reflects the quadratic correlation between power injection and voltage magnitude observed in power flow equations also known as the constant power load model. $B_i$ and $C_i$ are the parameters of the model to be determined.
\end{enumerate}

\noindent For Algorithm~\ref{alg:onlinegd_dist}, we adopt a step-size $\eta_t = 0.01/\sqrt{t}$. We plot the voltage levels in the modified IEEE $37$-bus system during the control process while updating the model with new input-output data becomes available using Algorithm~\ref{alg:online_model_and_control}. Fig.~\ref{fig:online_performance_without_control} illustrates potential violations of voltage regulation limits without control measures. The real-time estimated linear model provides a satisfactory control performance in Fig.~\ref{fig:online_performance_linear}, albeit with some fluctuations during periods of reduced PES generation. Notably, the control performance using the identified non-linear constant power load model, as shown in Fig.~\ref{fig:online_performance_nonlinear}, surpasses that of the linear model, which aligns with expectations given the non-linear model's closer representation of actual power flow dynamics. Overall, the models identified through our proposed algorithm demonstrate effective voltage regulation capabilities.

\begin{figure}[t] 
	\centering
	\includegraphics[scale=0.4,trim={0.2cm 1.5cm 0.2cm 0.1cm},clip] {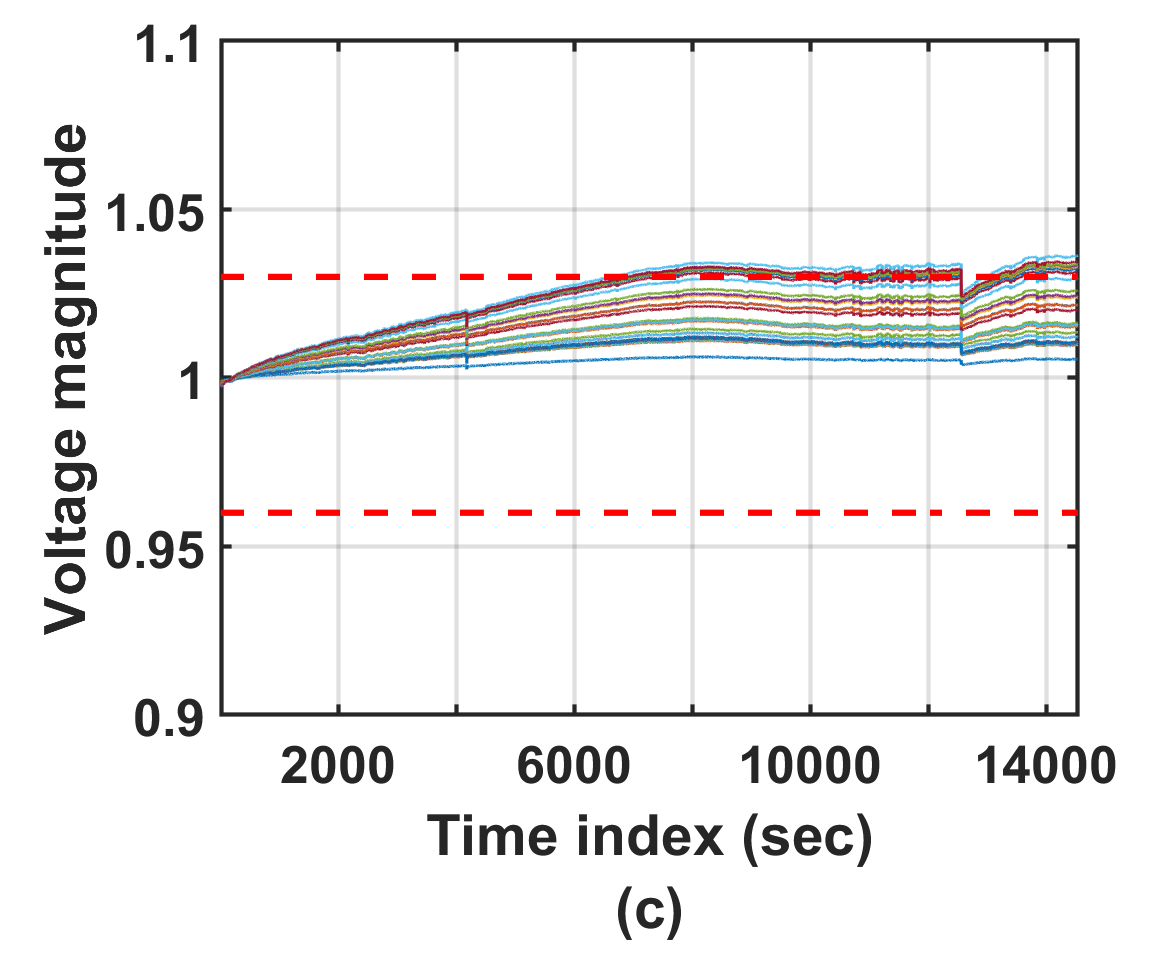} 
	\caption{The voltage magnitudes ($p.u.$) over time in the modified IEEE $37$-bus system with control decisions derived using the online distributed estimated linear map between power injections and voltage magnitudes. }	\label{fig:online_performance_linear} 
\end{figure}

\begin{figure}[h] 
	\centering
	\includegraphics[scale=0.4,trim={0.2cm 1.4cm 0.2cm -1cm},clip] {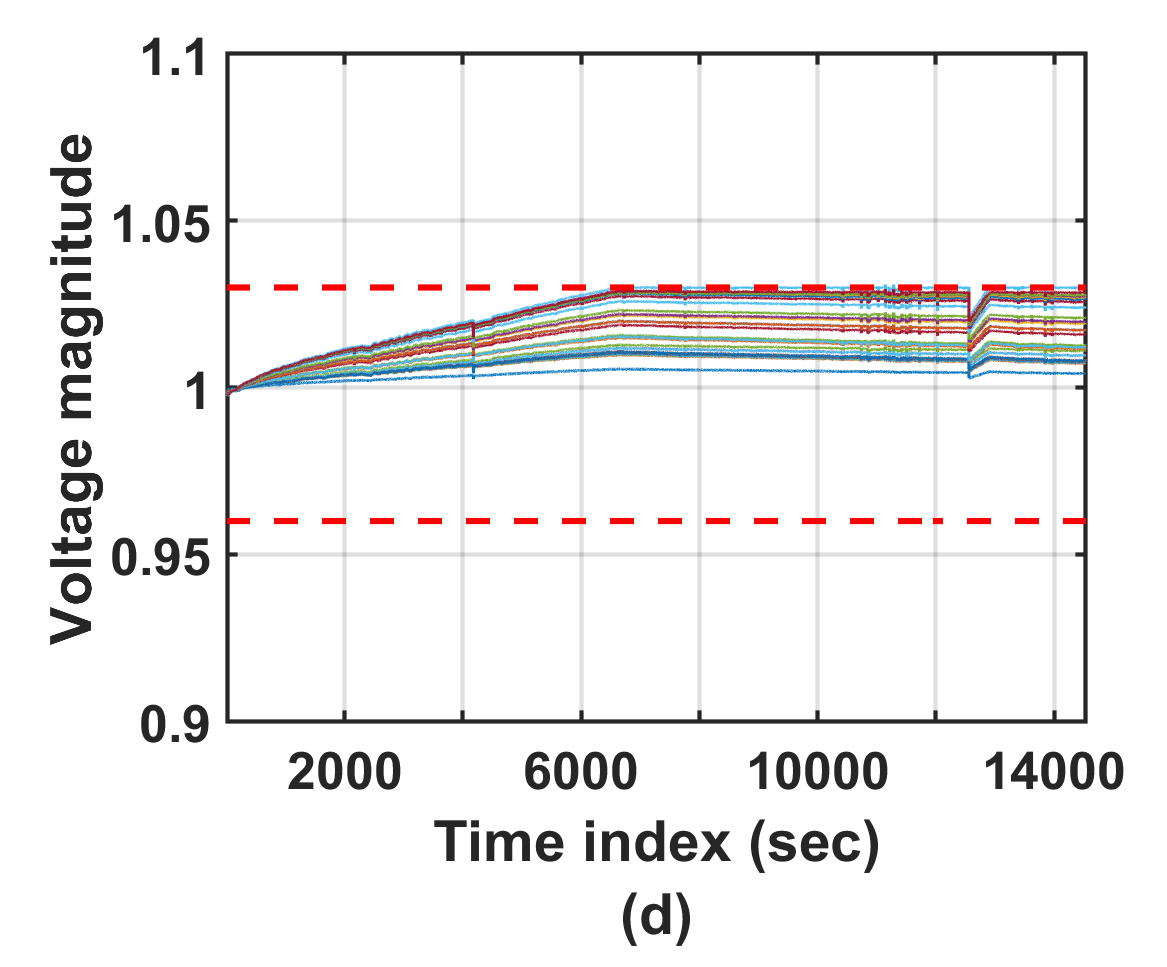}
	\caption{The voltage magnitudes ($p.u.$) over time in the modified IEEE $37$-bus system with control decisions derived using a non-linear map between power injections and voltage magnitudes updated online. }	\label{fig:online_performance_nonlinear} 
\end{figure}

\section{Conclusion}\label{sec:conclusion}
In this paper, we developed an online distributed algorithm where each agent updates its estimate of the model via an online gradient descent scheme utilizing the most recent input-output pair. We prove that the developed distributed algorithm has a sub-linear regret and determines the original system model. 
Further, agents only share non-linear estimates preserving their private information.  The numerical simulation study corroborates the efficacy of our developed algorithm with the identification of a more accurate quadratic power
flow model, which improves the voltage regulation performance of the control system. 
Looking ahead, our future endeavors will focus on extensive testing within real-world systems, aiming to further validate the performance and characterize the scalability of our method. 
with tens of thousands of nodes.

\bibliography{references}

\end{document}